\newtheorem{theorem}{Theorem}
\newenvironment{proof}[1][Proof]{\noindent\textbf{#1.} }{\ \rule{0.5em}{0.5em}}
\title{Calculation of the Relaxation Modulus in the Andrade Model by Using the Laplace Transform}
\author{Juan Luis Gonz\'{a}lez-Santander $^{1,}$, Giorgio Spada $^{2}$, Francesco Mainardi $^{3}$ and Alexander Apelblat $^{4}$}
\affil{%
$^{1}$ \quad Department de Mathematics, University of Oviedo, C/ Leopoldo Calvo Sotelo 18, 33007 Oviedo, Spain\\
$^{2}$ \quad Department of Physics and Astronomy Augusto Righi, University of Bologna, Viale Berti Pichat 8, \linebreak I-40127 Bologna, Italy; giorgio.spada@unibo.it\\
$^{3}$ \quad Department of Physics and Astronomy, University of Bologna, and INFN, Via Irnerio 46,\linebreak  I-40126 Bologna, Italy; francesco.mainardi@bo.infn.it\\
$^{4}$ \quad Department of Chemical Engineering, Ben Gurion University of the Negev, Beer Sheva 84105, Israel; apelblat@bgu.ac.il}
\begin{document}

\maketitle

\begin{abstract}
In the framework of the theory of linear viscoelasticity, we derive an
analytical expression of the relaxation modulus in the Andrade model $G_{\alpha }\left( t\right) $ for the case of rational parameter \mbox{$\alpha
=m/n\in (0,1)$} in terms of Mittag--Leffler functions from its Laplace
transform $\tilde{G}_{\alpha }\left( s\right) $. It turns out that the
expression obtained can be rewritten in terms of Rabotnov functions.
Moreover, for the original parameter $\alpha =1/3$ in the Andrade model, we
obtain an expression in terms of Miller-Ross functions. The asymptotic
behaviours of $G_{\alpha }\left( t\right) $ for $t\rightarrow 0^{+}$ and $t\rightarrow +\infty $ are also derived applying the Tauberian theorem. The
analytical results obtained have been numerically checked by solving the Volterra integral equation satisfied by $G_{\alpha }\left(
t\right) $ by using a successive approximation approach, as well as computing
the inverse Laplace transform of $\tilde{G}_{\alpha
}\left( s\right) $ by using Talbot's method.
\end{abstract}


\section{The Andrade Model in Linear Viscoelasticity}

In the framework of linear viscoelasticity theory, a ``transient'' phase of
deformation occurs right after the elastic response in creep phenomena and
is marked by a strain rate that changes over time \cite{Christensen}. Among
the rheological laws that exhibit a transient phase, the Andrade model has
been effectively used to describe the behavior of various materials. This
model was initially introduced by Andrade in 1910 to describe the elongation
of metallic wires under constant tensile stress \cite{Andrade1910}. Its main
feature is a transient that exhibits a fractional power function time
dependence $\sim$$t^{\alpha}$. In their empirical stress--strain relationship,
Andrade proposed the exponent $\alpha =1/3$ \cite{AndradeTercio}.
Nevertheless, later laboratory investigations have shown that values within
the range of $0<\alpha <1$ are indeed possible for certain \mbox{materials \cite%
{Walterova}}. It is worth noting that during the last dozen years, there has been an increasing interest in the Andrade model \cite{Rosti,Pandey}. Using the modern formulation of the Andrade \mbox{model \cite%
{AndradeModel}}, the creep compliance $J_{\alpha }\left( t\right) $ (i.e., the
strain per unit stress)\ is given by%
\begin{equation}
J_{\alpha }\left( t\right) =J_{U}+\beta t^{\alpha }+\frac{t}{\eta },\quad
t\geq 0,  \label{J(t)_Andrade}
\end{equation}%
where $J_{U}$ is the unrelaxed compliance, $\eta $ is the steady state
Newtonian viscosity, $\beta $ is the magnitude of the inelastic
contribution, and $\alpha $ represents the frequency of the compliance. The
number of free parameters that appear in (\ref{J(t)_Andrade})\ can be
reduced by adopting a useful parametrization given in \cite{Castillo}, whose
validity is discussed in \cite{Walterova}. Essentially, this parametrization
performs the following change of variables:%
\begin{equation}
J_{U}=\frac{1}{\mu },\quad \beta =\frac{1}{\mu \,\tau ^{\alpha }},\quad \eta
=\mu \,\tau ,  \label{change_variables}
\end{equation}%
thus, 
 we obtain%
\begin{equation}
J_{\alpha }\left( t\right) =\frac{1}{\mu }\left[ 1+\left( \frac{t}{\tau }%
\right) ^{\alpha }+\frac{t}{\tau }\right] ,\quad t\geq 0,  \label{J(t)_Spada}
\end{equation}%
where the parameters $\mu ,\tau >0$ have a clear and important physical
meaning. The aim of this paper is to analytically calculate in closed form
the relaxation modulus $G_{\alpha }\left( t\right) $ (i.e., the stress per
unit strain) from the creep compliance $J_{\alpha }\left( t\right) $ given
in (\ref{J(t)_Spada}) by using the inverse Laplace transform.

It is worth noting that the computation of $G_{\alpha }\left( t\right) $
from $J_{\alpha }\left( t\right) $ is also possible by numerically solving
the following Volterra integral equation of the second kind (\cite%
{Mainardi} \mbox{[Eqn. 2.87]}):
\begin{equation}
G_{\alpha }\left( t\right) =\mu\left[1-\int_{0}^{t}\frac{%
dJ_{\alpha }\left( t^{\prime }\right) }{dt^{\prime }}G_{\alpha }\left(
t-t^{\prime }\right) dt^{\prime }\right],  \label{Volterra_eq}
\end{equation}%
{just} as has been done for other models such as that of Jeffrey-Lomnitz
rheological law \cite{MainardiSpada}. Another numerical approach is to
obtain the relaxation modulus in the Laplace domain
$\tilde{G}_{\alpha }\left( s\right) =\mathcal{L}\left[ G_{\alpha} \left(t\right) ;s\right] $,
and then numerically evaluate the inverse Laplace
transform to obtain $G_{\alpha }\left( t\right) $. However, an analytical
solution is more desirable since it shows the role and weight of the model
parameters explicitly. In addition, analytical approaches are generally more
computationally efficient. In the present approach, we just analytically
calculate the inverse Laplace transform of $\tilde{G}_{\alpha }\left(
s\right) $. This result has been applied by some of the authors in
\cite{Spada}, and here we provide the mathematical details of the calculation.

Next, we derive $\tilde{G}_{\alpha }\left( s\right) $ in the Andrade model.
For this purpose, apply the Laplace transform to (\ref{J(t)_Spada}) in order
to obtain%
\begin{equation}
\mathcal{L}\left[ J_{\alpha }\left( t\right) ;s\right] =\tilde{J}_{\alpha
}\left( s\right) =\frac{1}{\mu \,s^{2}}\left[ s+\Gamma \left( 1+\alpha
\right) \tau ^{-\alpha }s^{1-\alpha }+\frac{1}{\tau }\right] .
\label{L[J(t)]}
\end{equation}

However, since the following relation is satisfied in any linear
viscoelasticity \mbox{rheology (\cite{Mainardi}} [Eqn. 2.8])%
\begin{equation}
\tilde{J}_{\alpha }\left( s\right) \tilde{G}_{\alpha }\left( s\right) =\frac{%
1}{s^{2}},  \label{J(s)*G(s)=1/s^2}
\end{equation}%
{we} conclude%
\begin{equation}
\tilde{G}_{\alpha }\left( s\right) =\frac{\mu \,\tau }{s\,\tau +\Gamma
\left( \alpha +1\right) \left( s\,\tau \right) ^{1-\alpha }+1}.
\label{G_alfa_def}
\end{equation}

As mentioned above, the range of values that are interesting for $\alpha $
are between $0$ and $1$, so we restrict our study to $\alpha \in (0,1)$. It
is worth noting that (\ref{G_alfa_def})\ can also be obtained rewriting the
Volterra integral Equation (\ref{Volterra_eq})\ in terms of the Laplace
convolution \mbox{product \cite{ApelblatLaplace}}, i.e.,
\begin{equation}
G_{\alpha }\left( t\right) =\mu \left( 1-G_{\alpha }\left( t\right) \ast
\frac{dJ_{\alpha }\left( t\right) }{dt}\right) ,
\label{Volterra_convolution}
\end{equation}%
{thus} taking the Laplace transform in (\ref{Volterra_convolution})\ and
solving for $\tilde{G}_{\alpha }\left( s\right) $, we arrive at (\ref%
{G_alfa_def}).

The manuscript is organized as follows. In Section \ref{Section: Miller-Ross}%
, we perform the calculation of $G_{\alpha }\left( t\right) $ for $\alpha
=1/3$ (as first suggested by Andrade) in terms of Miller-Ross functions.
Section \ref{Section:
Rabotnov Functions} generalizes this result for
rational $\alpha $ in terms of a finite sum of Mittag--Leffler functions,
which in turn can be expressed as a linear combination of Rabotnov
functions. It is worth highlighting that although in principle $\alpha$ can be a real number, the value it actually acquires in rheological models is a positive fractional number less than unity. Section \ref{Section: Asymptotic} calculates the asymptotic
behaviour of $G_{\alpha }\left( t\right) $ for $t\rightarrow 0^{+}$ and $%
t\rightarrow +\infty $ by using the Tauberian theorem. Section \ref{Section:
Numerical} shows some numerical verifications on the expressions of $%
G_{\alpha }\left( t\right) $ derived in the previous sections, as well as on
their asymptotic behaviors. Finally, we collect our conclusions in Section %
\ref{Section: Conclusions}.

\section{Laplace Inversion in Terms of Miller-Ross Functions for \boldmath{$ \protect%
\alpha =1/3$} \label{Section: Miller-Ross}}

Consider in (\ref{G_alfa_def}) the following change of variables:
\begin{equation}
b=\frac{1}{\tau },\quad c=\frac{\Gamma \left( \alpha +1\right) }{\tau
^{\alpha }},  \label{parameters_b_c}
\end{equation}%
{to} rewrite $\tilde{G}_{\alpha }\left( s\right) $ as%
\begin{equation}
\tilde{G}_{\alpha }\left( s\right) =\frac{\mu }{b+s+c\,s^{1-\alpha }}.
\label{G_a_Miller}
\end{equation}
In order calculate the inverse Laplace transform of $\tilde{G}_{\alpha
}\left( s\right) $ for the case $\alpha =1/3$, i.e.,%
\begin{equation}
\tilde{G}_{1/3}\left( s\right) =\frac{\mu }{b+s+c\,s^{2/3}},
\label{G_1/3_Miller}
\end{equation}%
{apply} the identity
\begin{equation}
\left( x+y\right) \left( x^{2}-xy+y^{2}\right) =x^{3}+y^{3},
\label{Cubic_identity}
\end{equation}%
{taking} $x=b+s$, and $y=c\,s^{2/3}$, to arrive at
\begin{equation}
\frac{1}{\mu }\tilde{G}_{1/3}\left( s\right) =b^{2}\underset{\tilde{R}\left(
s\right) }{\underbrace{\frac{1}{p\left( s\right) }}}+2b\underset{\tilde{Q}%
\left( s\right) =s\,\tilde{R}\left( s\right) }{\underbrace{\,s\frac{1}{%
p\left( s\right) }}}+\underset{s\,\tilde{Q}\left( s\right) }{\underbrace{%
\frac{s^{2}}{p\left( s\right) }}}-bc\,\underset{\tilde{U}_{2/3}\left(
s\right) }{\underbrace{\frac{s^{2/3}}{p\left( s\right) }}}-c\,\underset{s\,%
\tilde{U}_{2/3}\left( s\right) }{\underbrace{s\frac{s^{2/3}}{p\left(
s\right) }}}+c^{2}\underset{\tilde{U}_{4/3}\left( s\right) }{\underbrace{%
\frac{s^{4/3}}{p\left( s\right) }}},  \label{G_1/3_decomposition}
\end{equation}%
where%
\begin{equation}
p\left( s\right) =s^{3}+\left( 3b+c^{3}\right)
s^{2}+3b^{2}s+b^{3}=\prod_{k=1}^{3}\left( s-s_{k}\right) ,  \label{p(s)_def}
\end{equation}%
{and} $s_{k}$ are the roots of the cubic equation $p\left( s\right) =0$, i.e.,%
\begin{equation}
\begin{array}{l}
\displaystyle%
s_{1}=-\frac{1}{3}\left( M-N-L\right) , \\
\displaystyle%
s_{2}=-\frac{1}{3}\left( M+e^{i\,\pi /3}N+e^{-i\,\pi /3}L\right) , \\
\displaystyle%
s_{3}=-\frac{1}{3}\left( M+e^{-i\,\pi /3}N+e^{i\,\pi /3}L\right) ,%
\end{array}
\label{s_i}
\end{equation}%
{being}%
\begin{equation}
\begin{array}{l}
\displaystyle%
M=3b+c^{3}, \\
\displaystyle%
L=\sqrt[3]{\frac{3}{2}\sqrt{3b^{3}c^{6}\left( 27b+4c^{3}\right) }-\frac{27}{2%
}b^{2}c^{3}-9\,bc^{6}-c^{9}}, \\
\displaystyle%
N=\frac{\left( 6b+c^{3}\right) c^{3}}{L}.%
\end{array}
\label{MLN_def}
\end{equation}
First, rewrite $\tilde{R}\left( s\right) $ as
\begin{equation}
\tilde{R}\left( s\right) =\frac{1}{p\left( s\right) }=\sum_{k=1}^{3}\frac{%
\alpha _{k}}{s-s_{k}},\quad \alpha _{k}=\prod_{j\neq k}^{3}\frac{1}{%
s_{k}-s_{j}}.  \label{R(s)_def}
\end{equation}
Note that a simple algebraic calculation shows that%
\begin{equation}
\begin{array}{l}
\displaystyle%
\sum_{k=1}^{3}\alpha _{k}=0, \\
\displaystyle%
\sum_{k=1}^{3}\alpha _{k}\,s_{k}=0.%
\end{array}
\label{Sum_alfa=0}
\end{equation}
Now, define the following functions%
\begin{equation}
R\left( t\right) =\mathcal{L}^{-1}\left[ \tilde{R}\left( s\right) ;t\right] ,
\label{r(t)_def}
\end{equation}%
{thus,}%
\begin{equation}
Q\left( t\right) =\mathcal{L}^{-1}\left[ \tilde{Q}\left( s\right) ;t\right] =%
\mathcal{L}^{-1}\left[ s\,\tilde{R}\left( s\right) ;t\right] =R^{\prime
}\left( t\right) +R\left( 0\right) \,\delta \left( t\right) ,
\label{Q(t)_def}
\end{equation}%
{and}%
\begin{equation}
\mathcal{L}^{-1}\left[ s\,\tilde{Q}\left( s\right) ;t\right] =Q^{\prime
}\left( t\right) +Q\left( 0\right) \delta \left( t\right) =R^{\prime \prime
}\left( t\right) +R^{\prime }\left( 0\right) \,\delta ^{\prime }\left(
t\right) .  \label{L-1[s*Q(s)]}
\end{equation}
Furthermore,%
\begin{equation}
U_{\nu }\left( t\right) =\mathcal{L}^{-1}\left[ \tilde{U}_{\nu }\left(
s\right) ;t\right] =\mathcal{L}^{-1}\left[ \frac{s^{\nu }}{p\left( s\right) }%
;t\right] ,  \label{u_nu(t)_def}
\end{equation}%
{thus}%
\begin{equation}
\mathcal{L}^{-1}\left[ s\,\tilde{U}_{\nu }\left( s\right) ;t\right] =U_{\nu
}^{\prime }\left( t\right) +U_{\nu }\left( 0\right) \,\delta \left( t\right)
.  \label{L-1[s*U]}
\end{equation}
Therefore, the inverse Laplace transform of $\tilde{G}_{1/3}\left( s\right) $
is given by
\begin{eqnarray}
G_{1/3}\left( t\right) &=&\mathcal{L}^{-1}\left[ \tilde{G}_{1/3}\left(
s\right) ;t\right]  \label{g_1/3_decomposition} \\
&=&\mu \left\{ b^{2}\,R\left( t\right) +2b\,\left[ R^{\prime }\left(
t\right) +R\left( 0\right) \,\delta \left( t\right) \right] +R^{\prime
\prime }\left( t\right) +R^{\prime }\left( 0\right) \,\delta ^{\prime
}\left( t\right) \right.  \notag \\
&&+\left. bc\,U_{2/3}\left( t\right) -c\left[ U_{2/3}^{\prime }\left(
t\right) +U_{2/3}\left( 0\right) \,\delta \left( t\right) \right]
+c^{2}\,U_{4/3}\left( t\right) \right\} .  \notag
\end{eqnarray}

\subsection{Calculation of $R \left( t \right) $}

According to (\ref{R(s)_def}) and (\ref{r(t)_def}), we have%
\begin{equation}
R\left( t\right) =\mathcal{L}^{-1}\left[ \tilde{R}\left( s\right) ;t\right]
=\sum_{k=1}^{3}\alpha _{k}\mathcal{L}^{-1}\left[ \frac{1}{s-s_{k}};t\right]
=\sum_{k=1}^{3}\alpha _{k}\exp \left( s_{k}t\right) ,  \label{r(t)_resultado}
\end{equation}%
{so,} that%
\begin{equation}
\begin{array}{c}
\displaystyle%
R^{\prime }\left( t\right) =\sum_{k=1}^{3}\alpha _{k}\,s_{k}\exp \left(
s_{k}t\right) , \\
\displaystyle%
R^{\prime \prime }\left( t\right) =\sum_{k=1}^{3}\alpha _{k}\,s_{k}^{2}\exp
\left( s_{k}t\right) .%
\end{array}
\label{r''(t)_resultado}
\end{equation}
Therefore, from (\ref{Sum_alfa=0}), we obtain%
\begin{equation}
\begin{array}{l}
\displaystyle%
R\left( 0\right) =\sum_{k=1}^{3}\alpha _{k}=0, \\
\displaystyle%
R^{\prime }\left( 0\right) =\sum_{k=1}^{3}\alpha _{k}\,s_{k}=0.%
\end{array}
\label{r(0)=0}
\end{equation}

\subsection{Calculation of $U_{ \protect \nu } \left( t \right) $}

According to (\ref{u_nu(t)_def}), (\ref{R(s)_def}), and the inverse Laplace
transform (\cite{Prudnikov5} [Eqn. 2.1.2(9)])
\begin{equation}
\mathcal{L}^{-1}\left[ \frac{s^{\nu }}{s-a};t\right] =a^{\nu }e^{at}P\left(
-\nu ,at\right) ,  \label{L-1[s^nu/(s-a)]}
\end{equation}%
where
\begin{equation}
P\left( \nu ,x\right) =\frac{\gamma \left( \nu ,x\right) }{\Gamma \left( \nu
\right) }=\frac{1}{\Gamma \left( \nu \right) }\int_{0}^{x}z^{\nu -1}e^{-z}dz,
\label{P(nu,x)_def}
\end{equation}%
{denotes} the \textit{normalized lower incomplete gamma function} 
 (\cite{NIST} [Eqn.
8.2.4]), we have%
\begin{eqnarray}
U_{\nu }\left( t\right) &=&\mathcal{L}^{-1}\left[ \frac{s^{\nu }}{p\left(
s\right) };t\right]  \notag \\
&=&\sum_{k=1}^{3}\alpha _{k}\,\mathcal{L}^{-1}\left[ \frac{s^{\nu }}{s-s_{k}}%
;t\right]  \notag \\
&=&\sum_{k=1}^{3}\alpha _{k}\,s_{k}^{\nu }\exp \left( s_{k}t\right)
\,P\left( -\nu ,s_{k}t\right) .  \label{u_nu_resultado}
\end{eqnarray}
Therefore, apply the property (\cite{NIST} [Eqn. 8.4.2])%
\begin{equation}
\lim_{x\rightarrow 0}x^{\nu }P\left( -\nu ,x\right) =\frac{1}{\Gamma \left(
1-\nu \right) },  \label{lim_P(nu,x)}
\end{equation}%
{to} calculate, according to (\ref{Sum_alfa=0}), that
\begin{eqnarray}
U_{\nu }\left( 0\right) &=&\lim_{t\rightarrow 0}t^{-\nu
}\sum_{k=1}^{3}\alpha _{k}\,\left( s_{k}t\right) ^{\nu }P\left( -\nu
,s_{k}t\right)  \notag \\
&=&\lim_{t\rightarrow 0}\frac{t^{-\nu }}{\Gamma \left( 1-\nu \right) }%
\sum_{k=1}^{3}\alpha _{k}=0.  \label{u_nu(0)=0}
\end{eqnarray}
Furthermore, from the derivative formula
\begin{equation}
\frac{d}{dt}\left[ e^{at}P\left( -\nu ,at\right) \right] =a\left[
e^{at}P\left( -\nu ,at\right) +\frac{\left( at\right) ^{-\nu -1}}{\Gamma
\left( -\nu \right) }\right] ,  \label{Derivative_formula}
\end{equation}%
{we} calculate, according also to (\ref{Sum_alfa=0}), that%
\begin{eqnarray}
U_{\nu }^{\prime }\left( t\right) &=&\sum_{k=1}^{3}\alpha _{k}\,s_{k}^{\nu
+1}\left[ \exp \left( s_{k}t\right) \,P\left( -\nu ,s_{k}t\right) +\frac{%
\left( s_{k}t\right) ^{-\nu -1}}{\Gamma \left( -\nu \right) }\right]  \notag
\\
&=&\sum_{k=1}^{3}\alpha _{k}\,s_{k}^{\nu +1}\exp \left( s_{k}t\right)
\,P\left( -\nu ,s_{k}t\right) +\frac{t^{-\nu -1}}{\Gamma \left( -\nu \right)
}\sum_{k=1}^{3}\alpha _{k}  \notag \\
&=&\sum_{k=1}^{3}\alpha _{k}\,s_{k}^{\nu +1}\exp \left( s_{k}t\right)
\,P\left( -\nu ,s_{k}t\right) .  \label{u'_nu(t)_resultado}
\end{eqnarray}

\subsection{Calculation of $G_{1/3} \left( t \right) $}

Insert (\ref{r(t)_resultado})--(\ref{u'_nu(t)_resultado}) into (\ref%
{g_1/3_decomposition}) to arrive at the following result, after
simplification,
\begin{eqnarray}
&&G_{1/3}\left( t\right) =\mu \sum_{k=1}^{3}\alpha _{k}\exp \left(
s_{k}t\right)  \label{g_1/3(t)_resultado} \\
&&\quad \left\{ \left( b+s_{k}\right) \left[ b+s_{k}-c\,s_{k}^{2/3}P\left( -%
\frac{2}{3},s_{k}t\right) \right] +c^{2}s_{k}^{4/3}P\left( -\frac{4}{3}%
,s_{k}t\right) \right\} ,\quad t\geq 0.  \notag
\end{eqnarray}

It is worth noting that we can rewrite (\ref{g_1/3(t)_resultado})\ in terms
of the Miller-Ross functions (\cite{Mainardi} [Eqn. E.37]), defined as%
\begin{equation}
\mathrm{E}_{t}\left( \nu ,a\right) =\frac{a^{-\nu }e^{at}}{\Gamma \left( \nu
\right) }\,\gamma \left( \nu ,at\right) =a^{-\nu }e^{at}\,P\left( \nu
,at\right) ,  \label{Miller-Ross_def}
\end{equation}%
thus, after simplification, we arrive at the following result:%
\begin{equation}
G_{1/3}^{\mathrm{MR}}\left( t\right) =\mu \sum_{k=1}^{3}\alpha _{k}\left\{
\left( b+s_{k}\right) \left[ \left( b+s_{k}\right) e^{s_{k}t}-c\,\mathrm{E}%
_{t}\left( -\frac{2}{3},s_{k}\right) \right] +c^{2}\,\mathrm{E}_{t}\left( -%
\frac{4}{3},s_{k}t\right) \right\} ,  \label{G_tercio_MR}
\end{equation}%
where the superscript $\mathrm{MR}$ takes into account that $G_{1/3}\left(
t\right) $ is given in terms of Miller-Ross functions. Furthermore, according to (%
\ref{parameters_b_c}), the parameters $b=1/\tau $ and $c=\Gamma \left( \frac{%
4}{3}\right) \tau ^{-1/3}$. Moreover, $s_{k}$ and $\alpha_{k}$ are given in (\ref%
{s_i})--(\ref{R(s)_def}), respectively.

\section{Laplace Inversion in Terms of Rabotnov Functions} \label{Section:
Rabotnov Functions}

Consider (\ref{G_alfa_def}) for the case $\alpha =\frac{m}{n}\in
\mathbb{Q}
$,
\begin{equation}
0<m<n,\quad n,m\in
\mathbb{N}
,  \label{0<m<n}
\end{equation}%
and perform the change of variables $r=\left( s\tau \right) ^{1/n}$,%
\begin{equation}
\tilde{G}_{m/n}\left( r\right) =\frac{\mu \,\tau }{\underset{p_{n,m}\left(
r\right) }{\underbrace{r^{n}+\Gamma \left( 1+\frac{m}{n}\right) r^{n-m}+1}}},
\label{G_m/n_def}
\end{equation}%
where $p_{n,m}\left( r\right) $ is a polynomial of $n$-th order. If $%
p_{n,m}\left( r\right) $ has non-repeated roots $r_{k}$, $\left( k=1,\ldots
,n\right) $, then, according to (\cite{Atlas} [Eqn. 17:13:10]), we have%
\begin{equation}
\frac{1}{p_{n,m}\left( r\right) }=\sum_{k=1}^{n}\frac{1}{p_{n,m}^{\prime
}\left( r_{k}\right) \left( r-r_{k}\right) },  \label{1/p_result}
\end{equation}%
{thus}%
\begin{equation}
\tilde{G}_{m/n}\left( r\right) =\mu \,\tau \sum_{k=1}^{n}\frac{1}{%
p_{n,m}^{\prime }\left( r_{k}\right) \left( r-r_{k}\right) },
\label{G_m/n_p}
\end{equation}%
{and}%
\begin{equation}
\tilde{G}_{m/n}\left( s\right) =\mu \,\tau \sum_{k=1}^{n}\frac{1}{%
p_{n,m}^{\prime }\left( r_{k}\right) \left( s^{1/n}\tau ^{1/n}-r_{k}\right) }%
.  \label{G_m/n_s}
\end{equation}
Define,
\begin{eqnarray}
f_{m/n}\left( t\right) &=&\mathcal{L}^{-1}\left[ \tilde{G}_{m/n}\left(
s/\tau \right) ;t\right]  \notag \\
&=&\mu \,\tau \sum_{k=1}^{n}\frac{1}{p_{n,m}^{\prime }\left( r_{k}\right) }%
\mathcal{L}^{-1}\left[ \frac{1}{s^{1/n}-r_{k}}\right] ,  \label{f_1/n_def}
\end{eqnarray}%
{and} apply (\cite{Atlas} [Eqn. 45:14:4])%
\begin{equation}
\mathcal{L}^{-1}\left[ \frac{s^{\mu -\nu }}{s^{\mu }-a};t\right] =t^{\nu
-1}\,\mathrm{E}_{\mu ,\nu }\left( a\,t^{\mu }\right) ,  \label{L-1->ML}
\end{equation}%
where $\mathrm{E}_{\alpha ,\beta }\left( z\right) $ denotes the
two-parameter Mittag--Leffler function (\cite{Atlas} [Eqn. 45:14:2]),
\begin{equation}
\mathrm{E}_{\alpha ,\beta }\left( z\right) =\sum_{k=0}^{\infty }\frac{z^{k}}{%
\Gamma \left( \beta +\alpha \,k\right) }.  \label{ML_def}
\end{equation}%
Therefore, for $\mu =\nu =\frac{1}{n}$, we have%
\begin{equation}
\mathcal{L}^{-1}\left[ \frac{1}{s^{1/n}-a};t\right] =t^{1/n-1}\mathrm{E}_{%
\frac{1}{n},\frac{1}{n}}\left( a\,t^{1/n}\right) .  \label{L-1_ML}
\end{equation}
Insert (\ref{L-1_ML})\ in (\ref{f_1/n_def})\ to arrive at%
\begin{equation}
f_{m/n}\left( t\right) =\mu \,\tau \,t^{1/n-1}\sum_{k=1}^{n}\frac{\mathrm{E}%
_{\frac{1}{n},\frac{1}{n}}\left( r_{k}\,t^{1/n}\right) }{p_{n,m}^{\prime
}\left( r_{k}\right) }.  \label{f_m/n(t)_def}
\end{equation}
Finally, apply the property (\cite{Prudnikov5} [Eqn. 1.1.1(3)]),
\begin{equation}
\mathcal{L}^{-1}\left[ \tilde{F}\left( s\right) ;t\right] =F\left( t\right)
\ \leftrightarrow \ \mathcal{L}^{-1}\left[ \tilde{F}\left( a\,s\right) ;t%
\right] =\frac{1}{a}F\left( \frac{t}{a}\right) ,  \label{L-1_property}
\end{equation}%
{to} obtain%
\begin{equation}
G_{m/n}\left( t\right) =\mathcal{L}^{-1}\left[ \tilde{G}_{m/n}\left(
s\right) ;t\right] =\frac{1}{\tau }\,f_{m/n}\left( \frac{t}{\tau }\right) ,
\label{G_m/n->f_m/n}
\end{equation}%
i.e.,%
\begin{equation}
G_{m/n}\left( t\right) =\mu \,\left( \frac{t}{\tau }\right)
^{1/n-1}\sum_{k=1}^{n}\frac{\mathrm{E}_{\frac{1}{n},\frac{1}{n}}\left(
r_{k}\,\left( t/\tau \right) ^{1/n}\right) }{p_{n,m}^{\prime }\left(
r_{k}\right) },\quad t\geq 0,  \label{G_m/n(t)_resultado}
\end{equation}%
where remember that $r_{k}$ are the $n$ non-repeated roots of the
polynomial:
\begin{equation}
p_{n,m}\left( r\right) =r^{n}+\Gamma \left( 1+\frac{m}{n}\right) r^{n-m}+1.
\label{p_n,m(r)_def}
\end{equation}
Note that the solution is a linear combination of Rabotnov functions (\cite{Mainardi} [%
Eqn. E.46]):\
\begin{equation}
\mathrm{R}_{\nu }\left( \mu ,t\right) =t^{\nu }\,\mathrm{E}_{\nu +1,\nu
+1}\left( \mu \,t^{\nu +1}\right) ,  \label{Rabotnov_def}
\end{equation}%
{thereby},%
\begin{equation}
G_{m/n}\left( t\right) =\mu \sum_{k=1}^{n}\frac{\mathrm{R}_{1/n-1}\left(
r_{k},t/\tau \right) }{p_{n,m}^{\prime }\left( r_{k}\right) }.
\label{G_m/n_Rabotnov}
\end{equation}
For the particular case $\alpha =m/n=1/3$, we have%
\begin{equation}
G_{1/3}^{\mathrm{R}}\left( t\right) =\mu \sum_{k=1}^{3}\frac{\mathrm{R}%
_{-2/3}\left( r_{k},t/\tau \right) }{2\,\Gamma \left( \frac{4}{3}\right)
r_{k}+3\,r_{k}^{2}},\quad t\geq 0,  \label{GR_1/3(t)_resultado}
\end{equation}%
where the superscript $\mathrm{R}$ takes into account that $G_{1/3}\left(
t\right) $ is given in terms of Rabotnov functions, and $r_{k}$ are the
three distinct roots of the cubic equation,
\begin{equation}
r^{3}+\Gamma \left( \frac{4}{3}\right) r^{2}+1=0,  \label{cubic_eq}
\end{equation}%
i.e.,%
\begin{equation}
\begin{array}{l}
r_{1}\approx -1.40184, \\
r_{2}\approx 0.254432-0.805364\,i, \\
r_{3}\approx 0.254432+0.805364\,i.%
\end{array}
\label{r_i}
\end{equation}

\section{Asymptotic Behaviour via Tauberian Theorem} \label{Section:
Asymptotic}

Next, we will obtain the asymptotic behaviour of the relaxation modulus $%
G_{\alpha }\left( t\right) $ as $t\rightarrow 0^{+}$ and as $t\rightarrow
+\infty $ from its Laplace transform $\tilde{G}_{\alpha }\left( s\right) $
by using the following version of the Tauberian theorem, (for other version of the Tauberian theorem, see \cite{Sandev}).

\begin{theorem}
Consider that the Laplace transform of a function $f\left( t\right) $ is
given by $\tilde{f}\left( s\right) =\mathcal{L}\left[ f\left( t\right) ;s%
\right] $. The asymptotic behaviour of $\tilde{f}\left( s\right) $ as $%
s\rightarrow +\infty $ is given by \
\begin{equation}
\tilde{f}\left( s\right) \approx \mathcal{L}\left[ g\left( t\right) ;s\right]
,\quad s\rightarrow +\infty ,  \label{Theorem_s->inf}
\end{equation}%
where $g\left( t\right) $ is the asymptotic behaviour of $f\left( t\right) $
as $t\rightarrow 0^{+}$. Furthermore, the asymptotic behaviour of $\tilde{f}\left(
s\right) $ as $s\rightarrow 0^{+}$ is given by \
\begin{equation}
\tilde{f}\left( s\right) \approx \mathcal{L}\left[ h\left( t\right) ;s\right]
,\quad s\rightarrow 0^{+},  \label{Theorem_s->0}
\end{equation}%
where $h\left( t\right) $ is the asymptotic behaviour of $f\left( t\right) $
as $t\rightarrow +\infty $.
\end{theorem}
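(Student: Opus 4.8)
The plan is to exploit the scaling structure of the Laplace integral through the substitution $u=s\,t$. Writing $\tilde{f}(s)=\int_{0}^{\infty}e^{-st}f(t)\,dt$ and setting $t=u/s$ yields the identity
\[
\tilde{f}(s)=\frac{1}{s}\int_{0}^{\infty}e^{-u}\,f(u/s)\,du .
\]
This single identity drives both halves of the statement: for each fixed $u>0$ the argument $u/s$ of $f$ tends to $0^{+}$ as $s\rightarrow+\infty$ and to $+\infty$ as $s\rightarrow 0^{+}$, so in each regime the rescaled integrand is controlled by the matching one-sided asymptotics of $f$.

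For $s\rightarrow+\infty$ I would replace $f(u/s)$ by its near-origin profile $g(u/s)$. Provided the relative error $(f-g)/g$ is small enough, uniformly in a way that permits dominated convergence in the rescaled integral, one gets
\[
\tilde{f}(s)\approx\frac{1}{s}\int_{0}^{\infty}e^{-u}\,g(u/s)\,du=\int_{0}^{\infty}e^{-st}g(t)\,dt=\mathcal{L}\left[g(t);s\right],\qquad s\rightarrow+\infty,
\]
which is (\ref{Theorem_s->inf}); when $g(t)\sim C\,t^{\lambda}$ this is precisely Watson's lemma, with an explicit remainder. For $s\rightarrow 0^{+}$ the argument is symmetric: now $u/s\rightarrow+\infty$, so I would substitute the large-$t$ profile $h(u/s)$ for $f(u/s)$ and, again under a domination hypothesis legitimizing the interchange of limit and integral, conclude
\[
\tilde{f}(s)\approx\frac{1}{s}\int_{0}^{\infty}e^{-u}\,h(u/s)\,du=\int_{0}^{\infty}e^{-st}h(t)\,dt=\mathcal{L}\left[h(t);s\right],\qquad s\rightarrow 0^{+},
\]
which is (\ref{Theorem_s->0}).

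The only genuinely delicate point --- and hence the main obstacle --- is this passage to the limit under the integral sign, i.e. making precise the sense in which ``$\approx$'' and ``asymptotic behaviour'' are to be understood. For the elementary power-type profiles $g$ and $h$ that actually arise for $G_{\alpha}(t)$ in the previous sections, the bound on $|f(t)-g(t)|$ near $0$ (respectively on $|f(t)-h(t)|$ near $+\infty$) needed to dominate the rescaled integrand is available, so dominated convergence applies and the substitution $u=st$ can be run backwards as above; in full generality one must either add monotonicity or regular-variation hypotheses (the Hardy--Littlewood--Karamata framework) or, as we do here, read the statement as the leading-order matching of asymptotic expansions across the Laplace transform.
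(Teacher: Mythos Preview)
Your argument is sound at the heuristic level the theorem is stated, but it proceeds by a genuinely different mechanism than the paper. The paper does not rescale the Laplace integral; instead it \emph{postulates} that the asymptotics are finite power sums, $g(t)=\sum_{k} a_{k} t^{b_{k}}$ near $t=0^{+}$ and $\tilde f(s)\approx\sum_{k} c_{k} s^{d_{k}}$ near $s=0^{+}$, and then simply applies $\mathcal{L}[t^{b}]=\Gamma(b+1)/s^{b+1}$ (and its inverse) term by term. Your substitution $u=st$ instead exhibits the underlying analytic reason the correspondence holds, reduces the $s\to+\infty$ case to Watson's lemma, and does not presuppose power-type profiles. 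What the paper's route buys is exactly what the subsequent sections need: an explicit dictionary between coefficients of $s^{d}$ in $\tilde G_{\alpha}(s)$ and coefficients of $t^{-d-1}$ in $G_{\alpha}(t)$, ready for the expansions of $\tilde G_{\alpha}$ in powers of $s\tau$. What your route buys is a clearer view of the hypotheses a rigorous statement would require (your remarks on dominated convergence and the Hardy--Littlewood--Karamata framework), at the cost of one extra step to recover the term-by-term formula for power sums.
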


\begin{proof}
On the one hand, consider that the asymptotic behaviour of $f\left( t\right)
$ as $t\rightarrow 0^{+}$ is \mbox{given by}
\begin{equation}
f\left( t\right) \approx g\left( t\right)
=\sum_{k=1}^{N}a_{k}\,t^{b_{k}},\quad t\rightarrow 0^{+},  \label{f(t)_t->0}
\end{equation}%
{with} $N=0,1,2,\ldots $, and $0\leq b_{1}<b_{2}<\cdots <b_{N}$. Apply the
Laplace transform to (\ref{f(t)_t->0}) in order to obtain%
\begin{equation*}
\tilde{f}\left( s\right) =\mathcal{L}\left[ f\left( t\right) ;s\right]
\approx \mathcal{L}\left[ g\left( t\right) ;s\right] =\sum_{k=1}^{N}a_{k}\,%
\mathcal{L}\left[ t^{b_{k}};s\right] =\sum_{k=1}^{N}\frac{a_{k}\,\Gamma
\left( b_{k}+1\right) }{s^{b_{k}+1}}.
\end{equation*}%
Therefore, we obtain the asymptotic behaviour of $\tilde{f}\left( s\right) $
as $s\rightarrow +\infty $,
\begin{equation*}
\tilde{f}\left( s\right) \approx \sum_{k=0}^{N}\frac{a_{k}\,\Gamma \left(
b_{k}+1\right) }{s^{b_{k}+1}},\quad s\rightarrow +\infty ,
\end{equation*}%
{as} we wanted to prove.

On the other hand, consider that the asymptotic behaviour of $\tilde{f}%
\left( s\right) $ as $s\rightarrow 0^{+}$ is given by%
\begin{equation}
\tilde{f}\left( s\right) \approx \sum_{k=1}^{N}c_{k}\,s^{d_{k}},\quad
s\rightarrow 0^{+},  \label{F(s)_s->inf}
\end{equation}%
{with} $N=0,1,2,\ldots $, and $0\leq d_{1}<d_{2}<\cdots <d_{N}$. Apply the
inverse Laplace transform\linebreak  to (\ref{F(s)_s->inf}) in order to obtain
\begin{equation*}
f\left( t\right) =\mathcal{L}^{-1}\left[ \tilde{f}\left( s\right) ;t\right]
=\sum_{k=1}^{N}c_{k}\,\mathcal{L}^{-1}\left[ s^{d_{k}};t\right]
=\sum_{k=1}^{N}\frac{c_{k}}{\Gamma \left( -d_{k}\right) \,t^{d_{k}+1}}\,,
\end{equation*}%
Therefore, we obtain the asymptotic behaviour of $f\left( t\right) $ as $%
t\rightarrow +\infty $,%
\begin{equation*}
f\left( t\right) \approx h\left( t\right) =\sum_{k=1}^{N}\frac{c_{k}}{\Gamma
\left( -d_{k}\right) \,t^{d_{k}+1}},\quad t\rightarrow +\infty ,
\end{equation*}%
{as} we wanted to prove.
\end{proof}

\subsection{Asymptotic Behaviour for $t \rightarrow + \infty $}

We know that the Laplace transform of the relaxation modulus in the Andrade\linebreak
model is (\ref{G_alfa_def})
\begin{equation}
\tilde{G}_{\alpha }\left( s\right) =\frac{\mu \,\tau }{s\,\tau +\Gamma
\left( \alpha +1\right) \left( s\,\tau \right) ^{1-\alpha }+1},\quad \alpha
\in \left( 0,1\right) .  \label{G_a(s)}
\end{equation}
Since
\begin{equation}
\frac{1}{1-x}=1+x+x^{2}+\cdots ,\quad \left\vert x\right\vert <1,
\label{geom_series}
\end{equation}%
{we} have that%
\begin{equation}
\frac{1}{1+x}\approx 1-x,\quad x\rightarrow 0^{+},  \label{Asympt_x->0}
\end{equation}%
{thus}, taking $x=s\,\tau +\Gamma \left( \alpha +1\right) \left( s\,\tau
\right) ^{1-\alpha }$, we get%
\begin{equation}
\tilde{G}_{\alpha }\left( s\right) \approx \mu \,\tau \left[ 1-s\,\tau
-\Gamma \left( \alpha +1\right) \left( s\,\tau \right) ^{1-\alpha }\right]
,\quad s\rightarrow 0^{+}.  \label{G(s)_s->0_(a)}
\end{equation}
Since $\alpha \in \left( 0,1\right) $,
\begin{equation}
\tilde{G}_{\alpha }\left( s\right) \approx \mu \,\tau \left[ 1-\Gamma \left(
\alpha +1\right) \,\left( s\,\tau \right) ^{1-\alpha }\right] ,\quad
s\rightarrow 0^{+}.  \label{G(s)_s->0_(b)}
\end{equation}
According to (\ref{Theorem_s->0}), the asymptotic behaviour of $G_{\alpha
}\left( t\right) $ as $t\rightarrow +\infty $ is calculated as%
\begin{eqnarray}
G_{\alpha }\left( t\right) &\approx &\mu \,\tau \mathcal{L}^{-1}\left[
1-\Gamma \left( \alpha +1\right) \,\left( s\,\tau \right) ^{1-\alpha };t%
\right] \\ \notag
&=&\mu \,\tau \,\delta \left( t\right) -\mu \,\frac{\Gamma \left( \alpha
+1\right) }{\Gamma \left( \alpha -1\right) }\left( \frac{t}{\tau }\right)
^{\alpha -2}. \label{G_a(t)_asymp_(1)}
\end{eqnarray}
Applying the factorial property of the Gamma function, $\Gamma \left(
z+1\right) =z\,\Gamma \left( z\right) $, and taking into account that $%
\delta \left( t\right) =0$ as $t\rightarrow +\infty $, we conclude that%
\begin{equation}
G_{\alpha }\left( t\right) \approx \mu \,\alpha \left( 1-\alpha \right)
\left( \frac{t}{\tau }\right) ^{\alpha -2},\quad t\rightarrow +\infty .
\label{G_a(t)_asymp}
\end{equation}
Taking more terms in the expansion of $\tilde{G}_{\alpha }\left( s\right) $
as $s\rightarrow 0^{+}$, we can calculate more terms of $G_{\alpha }\left(
t\right) $ as $t\rightarrow +\infty $ by using (\ref{Theorem_s->0}).
Thereby, we obtain%
\begin{equation}
G_{\alpha }\left( t\right) \approx \mu \,\alpha \left[ \left( 1-\alpha
\right) \left( \frac{t}{\tau }\right) ^{\alpha -2}-2\,\frac{\Gamma
^{2}\left( 1+\alpha \right) }{\Gamma \left( 2\alpha -1\right) }\left( \frac{t%
}{\tau }\right) ^{2\alpha -3}+\cdots \right] ,\quad t\rightarrow +\infty .
\label{G_a(t)_t->inf_general}
\end{equation}
For the particular case $\alpha =\frac{1}{3}$ in (\ref{G_a(t)_asymp}), we
have
\begin{equation}
G_{1/3}\left( t\right) \approx \frac{2}{9}\,\mu \left( \frac{t}{\tau }%
\right) ^{-5/3},\quad t\rightarrow +\infty .  \label{Asymp_G_t->inf}
\end{equation}
As a consistency test, we can obtain (\ref{Asymp_G_t->inf})\ from the the
expression given in (\ref{G_m/n(t)_resultado}) for $G_{m/n}\left( t\right) $
with $m=1$ and $n=3$, and the asymptotic formula (\cite{Ederly} [Eqn. 18.1(22)])%
,%
\begin{equation}
\mathrm{E}_{\alpha ,\beta }\left( z\right) \approx -\sum_{k=1}^{N}\frac{%
z^{-k}}{\Gamma \left( \beta -\alpha k\right) }+O\left( \left\vert
z\right\vert ^{-N}\right) ,\quad z\rightarrow \infty ,\quad \left\vert \arg
\left( -z\right) \right\vert <\left( 1-\frac{\alpha }{2}\right) \pi .
\label{ML_t->inf}
\end{equation}

\subsection{Asymptotic Behaviour for $t \rightarrow 0^{+}$}

Rewrite the Laplace transform of the relaxation modulus (\ref{G_a(s)}), as
follows:%
\begin{equation}
\tilde{G}_{\alpha }\left( s\right) =\frac{\mu }{s}\left[ \frac{1}{1+\Gamma
\left( \alpha +1\right) \left( s\,\tau \right) ^{-\alpha }+\left( s\,\tau
\right) ^{-1}}\right] .  \label{G_a(t)_2}
\end{equation}
Note that, for $a\in \left( 0,1\right) $ (thus $1-a\in \left( 0,1\right) $),
and $A\in
\mathbb{R}
$, we have that%
\begin{equation}
\lim_{y\rightarrow +\infty }\frac{1+A\,y^{-a}+y^{-1}}{1+A\,y^{-a}}%
=\lim_{y\rightarrow +\infty }\frac{y+A\,y^{1-a}+1}{y+A\,y\,^{1-a}}=1,
\label{limit_formula}
\end{equation}%
{thus}, taking $A=\Gamma \left( \alpha +1\right) $, $a=\alpha \in \left(
0,1\right) $, and $y=s\tau \rightarrow +\infty $, (i.e., $s\rightarrow
+\infty $, since $\tau >0$), we get
\begin{equation}
1+\Gamma \left( \alpha +1\right) \left( s\,\tau \right) ^{-\alpha }+\left(
s\,\tau \right) ^{-1}\approx 1+\Gamma \left( \alpha +1\right) \left( s\,\tau
\right) ^{-\alpha },\quad s\rightarrow +\infty .  \label{Asymp_formula}
\end{equation}
Apply (\ref{Asymp_formula})\ to (\ref{G_a(t)_2}), in order to obtain%
\begin{equation}
\tilde{G}_{\alpha }\left( s\right) \approx \frac{\mu }{s}\left[ \frac{1}{%
1+\Gamma \left( \alpha +1\right) \left( s\,\tau \right) ^{-\alpha }}\right]
,\quad s\rightarrow +\infty .  \label{G(s)_s->inf_(b)}
\end{equation}
Now, perform the change of variables $x=1/z$ in (\ref{Asympt_x->0}),
\begin{equation}
\frac{1}{1+\frac{1}{z}}\approx 1-\frac{1}{z},\quad z\rightarrow +\infty ,
\label{geom_1/z}
\end{equation}%
{and} take $z=\frac{\left( s\,\tau \right) ^{\alpha }}{\Gamma \left( \alpha
+1\right) }\rightarrow +\infty $ (i.e., $s\rightarrow +\infty $), to arrive at%
\begin{equation}
\tilde{G}_{\alpha }\left( s\right) \approx \mu \left[ \frac{1}{s}-\frac{%
\Gamma \left( \alpha +1\right) }{\tau ^{\alpha }s\,^{\alpha +1}}\right]
,\quad s\rightarrow +\infty .  \label{G(s)_s->inf_(c)}
\end{equation}
According to (\ref{Theorem_s->inf}), the asymptotic behaviour of $G_{\alpha
}\left( t\right) $ as $t\rightarrow 0^{+}$ is calculated as%
\begin{equation}
G_{\alpha }\left( t\right) \approx \mu \,\mathcal{L}^{-1}\left[ \frac{1}{s}-%
\frac{\Gamma \left( \alpha +1\right) }{\tau ^{\alpha }s\,^{\alpha +1}};t%
\right] , \label{G_a(t)_inv}
\end{equation}%
{i.e.,}
\begin{equation}
G_{\alpha }\left( t\right) \approx \mu \left[ 1-\left( \frac{t}{\tau }%
\right) ^{\alpha }\right] ,\quad t\rightarrow 0^{+}.  \label{G_a(t)_t->0}
\end{equation}
Again, taking more terms in the expansion of $\tilde{G}_{\alpha }\left(
s\right) $ as $s\rightarrow +\infty $, we can calculate more terms of $%
G_{\alpha }\left( t\right) $ as $t\rightarrow 0^{+}$ by using (\ref%
{Theorem_s->inf}). Thereby, we obtain%
\begin{equation}
G_{\alpha }\left( t\right) \approx \mu \left[ 1-\left( \frac{t}{\tau }%
\right) ^{\alpha }+\,\frac{\Gamma ^{2}\left( 1+\alpha \right) }{\Gamma
\left( 2\alpha -1\right) }\left( \frac{t}{\tau }\right) ^{2\alpha }+\cdots %
\right] ,\quad t\rightarrow 0^{+}.  \label{G_a(t)_t->0_general}
\end{equation}
Note that the particular case $\alpha =\frac{1}{3}$ in (\ref{G_a(t)_t->0})
yields
\begin{equation}
G_{1/3}\left( t\right) \approx \mu \left[ 1-\left( \frac{t}{\tau }\right)
^{1/3}\right] ,\quad t\rightarrow 0^{+}.  \label{Asymp_G_t->0}
\end{equation}

As a consistency test, we can obtain (\ref{Asymp_G_t->0})\ from the the
expression given in (\ref{G_m/n(t)_resultado}) for $G_{m/n}\left( t\right) $
with $m=1$ and $n=3$, and the definition of the Mittag--Leffler Function (\ref%
{ML_def}). Furthermore, the asymptotic formula given in (\ref{G_a(t)_t->0})\ can be
obtained from the Volterra integral \mbox{Equation (\ref{Volterra_eq})}. Indeed,
taking into account (\ref{change_variables}) and (\ref{J(t)_Spada}) and
performing the change of variables $x=t-t^{\prime }$, this integral equation
reads as
\begin{equation}
G_{\alpha }\left( t\right) =\mu -\frac{1}{\tau }\int_{0}^{t}\left[ 1+\alpha
\left( \frac{t-x}{\tau }\right) ^{\alpha -1}\right] G_{\alpha }\left(
x\right) \,dx,  \label{Volterra_Ga(t)}
\end{equation}%
{thus,}
\begin{equation}
\lim_{t\rightarrow 0^{+}}G_{\alpha }\left( t\right) =\mu .  \label{G(0) = mu}
\end{equation}
According to (\ref{G(0) = mu}), we can take the approximation $G_{\alpha
}\left( x\right) \approx \mu $ as $t\rightarrow 0^{+}$ in (\ref%
{Volterra_Ga(t)}), thus
\begin{eqnarray}
G_{\alpha }\left( t\right) &\approx &\mu -\frac{\mu }{\tau }\int_{0}^{t}%
\left[ 1+\alpha \left( \frac{t-x}{\tau }\right) ^{\alpha -1}\right] \,dx
\label{G_Volterra_int} \\
&=&\mu \left[ 1-\frac{t}{\tau }-\left( \frac{t}{\tau }\right) ^{\alpha }%
\right]  \notag
\end{eqnarray}
Recalling that $\alpha \in \left( 0,1\right) $, we recover (\ref{G_a(t)_t->0}%
), i.e., \
\begin{equation}
G_{\alpha }\left( t\right) \approx \mu \left[ 1-\left( \frac{t}{\tau }%
\right) ^{\alpha }\right] ,\quad t\rightarrow 0^{+}.
\label{G(t)_t->0_Volterra}
\end{equation}

Figure \ref{Figure: G_tercio_plot}\ presents the graph of $G_{1/3}\left(
t\right) $ for $\mu =1$ and different values of $\tau $. Figure \ref{Figure: G tercio asymp} shows the asymptotic behaviours given in (\ref{Asymp_G_t->0})\
and (\ref{Asymp_G_t->inf}) for $G_{1/3}\left( t\right) $ with $\mu =1$ and $%
\tau =\frac{1}{2}$.

\begin{figure}[htbp]
\includegraphics[width=10cm]{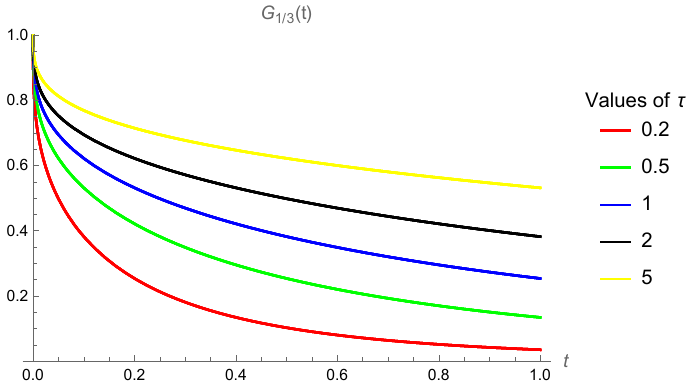}
\caption{Graph of $G_{1/3}\left( t\right) $ for $\protect\mu =1$. }
\label{Figure: G_tercio_plot}
\end{figure}

\begin{figure}[htbp]
\includegraphics[width=10cm]{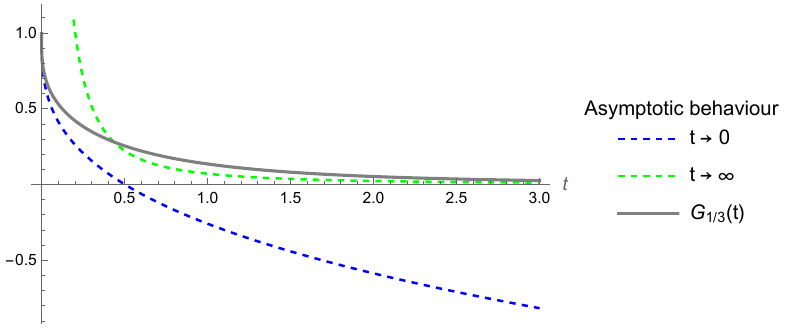}
\caption{Asymptotic 
 behaviour of $G_{1/3}\left( t\right) $ for $\protect\mu %
=1$ and $\protect\tau =\frac{1}{2}$.}
\label{Figure: G tercio asymp}
\end{figure}

\section{Numerical Results} \label{Section: Numerical}

\subsection{Volterra Integral Equation}

The quadrature formulas\ to numerically solve the Volterra integral equation
\cite{Rakotch} that satisfies $G_{\alpha }\left( t\right) $, i.e., Equation 
 (\ref%
{Volterra_Ga(t)}), fail because from (\ref{Asymp_G_t->0}), recalling that $%
\alpha \in \left( 0,1\right) $ and $\mu ,\tau >0$, we have that%
\begin{equation}
\lim_{t\rightarrow 0^{+}}\frac{dG_{\alpha }\left( t\right) }{dt}=-\infty .
\label{Derivative_inf}
\end{equation}
However, we can apply a successive approximation method in order to
numerically compute $G_{\alpha }\left( t\right) $ (\cite{Tricomi} [Sect. 2.1
]).
This method states that if we have the Volterra integral equation of the
second kind
\begin{equation}
u\left( t\right) =f\left( t\right) +\int_{0}^{t}K\left( x,t\right) \,u\left(
x\right) \,dx,  \label{Volterra_general}
\end{equation}%
{we} take as zeroth approximation
\begin{equation}
u^{\left( 0\right) }\left( t\right) =f\left( t\right) ,
\label{Volterra_initial_approx}
\end{equation}%
{and} for the successive approximations $j=1,2,\ldots $
\begin{equation}
u^{\left( j\right) }\left( t\right) =f\left( t\right) +\int_{0}^{t}K\left(
x,t\right) \,u^{\left( j-1\right) }\left( x\right) \,dx.
\label{Volterra_succesive_approx}
\end{equation}

Figure \ref{Figure: Volterra}\ shows the application of this successive
approximation method to the solution of (\ref{Volterra_Ga(t)}) (i.e., taking as
kernel $K\left( x,t\right) =-\frac{1}{\tau }\left[ 1+\alpha \left( \frac{t-x%
}{\tau }\right) ^{\alpha -1}\right] $, and $f\left( t\right) =\mu $ in (\ref%
{Volterra_general})), for $\mu =\tau =1$ and $\alpha =1/3$. It is apparent
that as the order of approximation increases, we get a better approximation
to the analytical solution $G_{1/3}\left( t\right) $ obtained in (\ref%
{G_tercio_MR})\ or (\ref{GR_1/3(t)_resultado}). Similar graphs are obtained
for other rational values of $\alpha \in \left( 0,1\right) $ compared to the
analytical solution $G_{\alpha }\left( t\right) $ obtained in (\ref%
{G_m/n(t)_resultado}).

\begin{figure}[htbp]
\includegraphics[width=10cm]{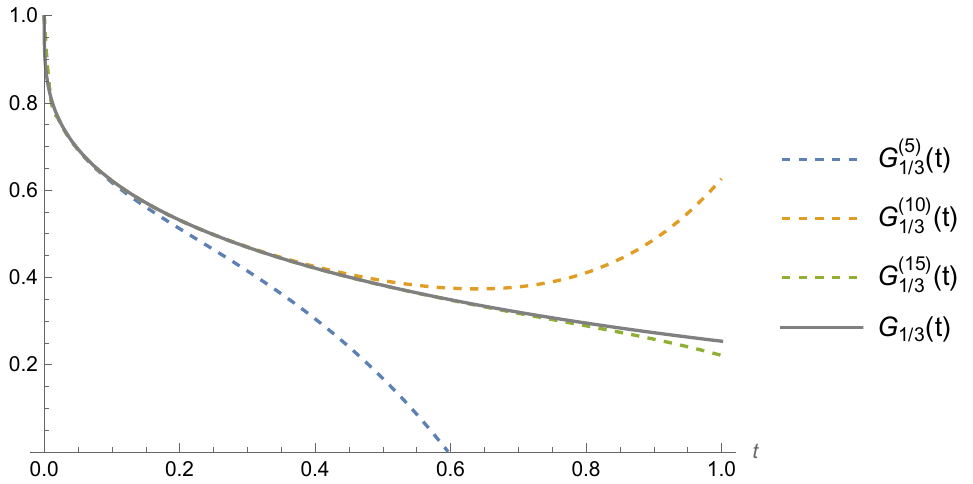}
\caption{Successive approximations $G_{1/3}^{\left( j\right) }\left(
t\right) $ for $\protect\mu =\protect\tau =1$.}
\label{Figure: Volterra}
\end{figure}

Note that this successive approximation method has been successfully applied
in Section \ref{Section: Asymptotic}\ in order to derive the first order
asymptotic formula (\ref{G_a(t)_t->0}).

\subsection{Inverse Laplace Transform}

According to our numerical experiments, the relative error between the
analytical formulas of $G_{\alpha }\left( t\right) $ and the numerical
Laplace inversion of $\tilde{G}_{\alpha }\left( s\right) $, never exceeds
the value of $10^{-9}$ in the time interval $t\in \left[ 0,5\right] $. Below
we present some of these numerical experiments.

Figure \ref{Figure: Delta_MR} shows the relative error $\Delta _{\mathrm{MR}%
}\left( t\right) $ between $G_{1/3}^{\mathrm{MR}}\left( t\right) $ and $%
G_{1/3}^{\mathrm{num}}\left( t\right) $, i.e., the numerical Laplace
inversion of $\tilde{G}_{1/3}\left( s\right) $ using Talbot's method \cite%
{Talbot},
\begin{equation}
\Delta _{\mathrm{MR}}\left( t\right) =\left\vert 1-\frac{G_{1/3}^{\mathrm{MR}%
}\left( t\right) }{G_{1/3}^{\mathrm{num}}\left( t\right) }\right\vert .
\label{Delta_MR_def}
\end{equation}

\begin{figure}[htbp]
\includegraphics[width=8cm]{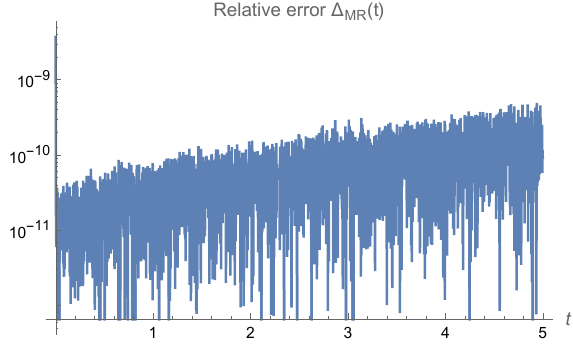}
\caption{Graph of $\Delta _{\mathrm{MR}}\left( t\right) $ for $\protect\mu =%
\protect\tau =1$.}
\label{Figure: Delta_MR}
\end{figure}

Figure \ref{Figure: Delta_R} shows the relative error $\Delta _{\mathrm{R}%
}\left( t\right) $ between $G_{1/3}^{\mathrm{R}}\left( t\right) $ and $%
G_{1/3}^{\mathrm{num}}\left( t\right) $,
\begin{equation}
\Delta _{\mathrm{R}}\left( t\right) =\left\vert 1-\frac{G_{1/3}^{\mathrm{R}%
}\left( t\right) }{G_{1/3}^{\mathrm{num}}\left( t\right) }\right\vert .
\label{Delta_R_def}
\end{equation}

\begin{figure}[htbp]
\includegraphics[width=8cm]{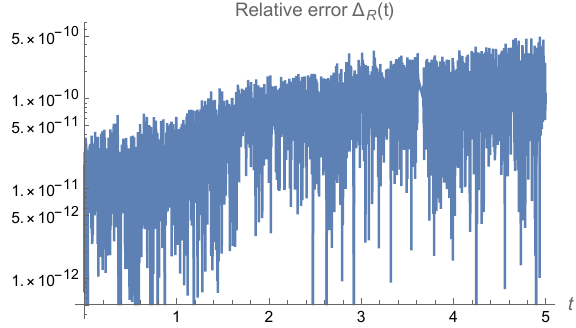}
\caption{Graph 
 of $\Delta _{\mathrm{R}}\left( t\right) $ for $\protect\mu =%
\protect\tau =1$.}
\label{Figure: Delta_R}
\end{figure}

\section{Conclusions} \label{Section: Conclusions}

Considering the Andrade model in linear viscoelasticity, we have derived for
the first time an analytical expression for the relaxation modulus in the
time domain $G_{\alpha }\left( t\right) $ considering a rational parameter $%
\alpha \in \left( 0,1\right) $ in terms of Mittag--Leffler functions (or
equivalently, as a linear combination of Rabotnov functions). For the
original parameter $\alpha =1/3$ of the Andrade model, we have derived a
particular expression for $G_{1/3}\left( t\right) $ in terms of Miller-Ross
functions. It turns out that this last expression is numerically more
efficient (approximately twice faster) than the equivalent one in terms of
Rabotnov functions.

Furthermore, we have obtained the asymptotic behaviour of $G_{\alpha }\left(
t\right) $ for $t\rightarrow 0^{+}$ and $t\rightarrow +\infty $ using the
Tauberian theorem. We have derived the same expression for the asymptotic
behaviour as $t\rightarrow 0^{+}$ by using the Volterra integral equation of
the second kind that \mbox{$G_{\alpha }\left( t\right) $ satisfies.}

Finally, numerical computations for particular values of the parameters have
been performed in order to verify the analytical solutions obtained. For
this purpose, we have used Talbot's method for the numerical computation of
the inverse Laplace transform, and the method of successive approximations
for the numerical evaluation of the Volterra integral equation of the second
kind.

\end{document}